\documentclass[letterpaper, 10 pt, conference]{ieeeconf}  

\IEEEoverridecommandlockouts                              

\usepackage{graphicx}

\usepackage{color}
\usepackage{mathtools,amsmath,amssymb}
\usepackage{comment}
\usepackage[ruled,vlined]{algorithm2e} 
\usepackage{multirow, booktabs}

\usepackage{graphicx}
\usepackage{subfigure}
\usepackage{epsfig} 
\usepackage{cite}
\usepackage{bm}

\newtheorem{theorem}{Theorem}[section]

\newtheorem{corollary}[theorem]{Corollary}
\newtheorem{definition}[theorem]{Definition}
\newtheorem{remark}[theorem]{Remark}

\newcommand{\blockend}{\hfill$\Box$}

\usepackage{tikz}

\begin{document}

\title{Environment Parameter Gradient Theorem for Co-Design\\ in Reinforcement Learning}
\author{Amber Srivastava
\thanks{This work is partially supported by the Prime Minister Early Career Research Grant from the Anusandhan National Research Foundation (ANRF/ECRG/2024/004876) and the Faculty Seed Grant at IIT Delhi.}
\thanks{Amber Srivastava is with Faculty of Mechanical Engineering, Indian Institute of Technology Delhi, India.
        {\tt\small email: asrvstv@iitd.ac.in}}%
}


\maketitle


\thispagestyle{empty}
\pagestyle{empty}

\begin{abstract}
Reinforcement learning (RL) is traditionally concerned with learning a control policy for a fixed environment. In many engineering systems, however, the environment itself is alterable, i.e., physical or operational parameters can be tuned to shape the system's transition dynamics and costs experienced by the RL agent. This motivates jointly optimizing both the policy and the environment design parameters. To this end, we establish an Environment Parameter Gradient Theorem --- a formal expression for the gradient of the RL's objective function with respect to environment parameters. The key theoretical device is a generalized action-value function $Q_{\pi,\xi}(s,a,\zeta)$, which comprises two copies of the environment parameters: $\zeta$ governs the cost and transition dynamics at the current state--action pair, while $\xi$ governs the future rollouts. This decoupling yields a tractable closed-form gradient expression and is essential to the theorem's derivation. Building on this result, we develop a model-free algorithm that simultaneously learns the optimal policy and the environment parameters. We demonstrate the efficacy of our framework on a UAV network design problem, where the optimal UAV placement (environment parameters) and communication routes (governed by the policy) are learned jointly to minimize the total communication cost in the network.
\end{abstract}


\section{Introduction} \label{sec: Introduction}
Reinforcement learning (RL) has emerged as a powerful framework for sequential decision-making under uncertainty, with applications spanning robotics, autonomous systems, and control \cite{Sutton1998}. In its standard formulation, RL focuses on adapting the agent policy while assuming that the underlying system configuration remains fixed. However, in many areas such as robotics \cite{pmlr-v100-luck20a}, communication networks \cite{9517030}, and vehicle platooning \cite{firooznia2017co}, the system configuration itself is re-designable, and one can additionally optimize the configuration parameters along with the agent's policy. 

These co-design problems have been addressed through several approaches in the literature. Classical co-design methods in control theory typically assume a parametric model of the system and jointly optimize the plant and controller in a model-based setting \cite{garcia2019control,pao2024control}. In contrast, RL-based methods have maintained a distribution over candidate environment designs, adapting these distribution via likelihood-ratio (score-function) gradients alongside the policy, such as \cite{schaff2019jointly,cauz2024reinforcement,ha2019reinforcement}, while other approaches employ black-box optimization methods to search directly over the design space \cite{bhatia2021evolution,pmlr-v100-luck20a}. Note that the former estimate gradients of the objective with respect to the parameters of the distribution over the environment design parameters, rather than the gradients with respect to the design parameters themselves, while the latter do not exploit such gradient information at all. Closer to our setting, \cite{jackson2021orchid,bolland2022jointly, he2024morph} directly compute gradients of the objective with respect to the design parameters; however, these approaches rely on access to a differentiable model of the underlying system dynamics.

An explicit expression of the gradient of the RL objective with respect to environment design parameters --- an analogue of the policy gradient theorem~\cite{sutton1999policy,silver2014deterministic} for these parameters --- would provide a principled foundation for gradient-based optimization, admitting model-free estimation from sampled trajectories, without relying on a distribution over designs or a known model of the system dynamics. To the best of our knowledge, no prior work derives such a tractable gradient expression that also admits model-free estimation from the trajectory rollouts.

The stochastic policy gradient (SPG) \cite{williams1992simple,sutton1999policy} and the  deterministic policy gradient (DPG) \cite{silver2014deterministic} theorems are landmark results in RL that derive tractable gradient expressions for the objective function with respect to policy parameters $\theta$. Deriving such a result for the environment design parameters $\zeta$ is difficult because $\zeta$ enters the objective function through three channels simultaneously: the instantaneous cost, the transition dynamics, and the optimal policy all depend on $\zeta$ {\em explicitly}. Its dependence is therefore more deeply entangled than that of $\theta$ in the policy gradient setting~\cite{sutton1999policy,silver2014deterministic}, where $\theta$ enters through a single channel, the action.

The compounding dependencies of the objective $J(\pi,\zeta)$ on $\zeta$ give rise to two distinct challenges: (a) deriving the tractable expression for the gradient, and (b) devising a model-free method to estimate it. To address the first challenge, we introduce a generalized action-value function $\mathcal{Q}_{\pi,\xi}(s,a,\zeta)$ that carries two {\em copies} of the environment parameter, $\zeta$ and $\xi$, with distinct roles as detailed in Section~\ref{sec: ProbSoln}. Briefly, $\zeta$ governs the instantaneous cost and transition dynamics at the current step from  $(s,a)$, while $\xi$ governs the environment parameters in future rollouts under policy $\pi$. This two-copy separation permits a Bellman-type recursive sensitivity equation that relates the parameter gradient at a given state to that at the successor states. Subsequently, unrolling this recursion yields a gradient expression that computes the expectation with respect to the same distribution $p_{\pi,\zeta}$ over the paths of the MDP as used in the objective $J(\pi,\zeta)$; thus, permitting model-free gradient estimates from trajectories sampled using the above distribution.

Realizing the above model-free estimator in practice, however, requires learning the {\em generalized} action-value function $\mathcal{Q}_{\pi,\xi}(s,a,\zeta)$ from sampled transitions, which constitutes the second challenge that we address. More precisely, as required by the gradient expression evaluated at $\bar{\zeta}$, the learned critic must capture the local sensitivity of $\mathcal{Q}_{\pi,\xi}(s,a,\zeta)$ with respect to the current-step parameter copy $\zeta$ around $\bar{\zeta}$, while the future-rollout copy $\xi$ is fixed at $\bar{\zeta}$. We address this via the training procedure detailed in Section \ref{sec: solution_methodology}. Further, note that $\zeta$ and $\xi$ are copies of the same environment parameter; the two-copy separation is only an analytical device that facilitates the gradient derivation and its model-free estimation. The final design is governed by a single environment parameter value, common to both copies in $\mathcal{Q}_{\pi,\xi}(s,a,\zeta)$.


Our main contributions are as follows: (i) {\em Environment Parameter Gradient Theorem (EnvPG):} We establish a formal gradient theorem for environment parameters in infinite-horizon MDPs, providing a tractable closed-form expression for $\nabla_\zeta J(\pi,\zeta)$ amenable to model-free estimation. To the best of our knowledge, this is the first such result in the literature. (ii) {\em Generalized Action-Value Function:} We introduce the two-copy generalized action-value function $Q_{\pi,\xi}(s,a,\zeta)$ as the key theoretical device enabling the above theorem. (iii) {\em Model-Free Algorithm:} We develop a model-free algorithm that jointly optimizes the policy and environment design parameters from sampled trajectories, without knowing the transition dynamics and the instantaneous costs. We demonstrate the EnvPG theorem and the corresponding model-free algorithm on a synthetic UAV-based communication network design problem, where the algorithm co-optimizes UAV placement and packet routing policy. The resulting designs closely match the ones from a {\em model-based oracle}, where the transition dynamics and cost along with their dependence on $\zeta$, are explicitly known. In particular, the average objective value difference between the two is approximately $0.56\%$, validating the correctness of the model-free algorithm and the utility of EnvPG theorem.

\section{Problem Formulation}\label{sec: ProbForm}
We consider a parameterized Markov decision process (MDP) $\mathcal{M}(\zeta) = \langle \mathcal{S}, \mathcal{A}, p_{\zeta}, c_{\zeta}, \gamma \rangle$, where $\mathcal{S}$ is a finite state space, $\mathcal{A}$ is a finite action space, $\zeta\in\mathbb{R}^d$ denotes the environment design parameter, $p_{\zeta}(s'|s,a)$ is the transition kernel giving the probability of transitioning to $s'$ from $s$ under action $a$, $c_{\zeta}(s,a,s') \geq 0$ is the instantaneous cost of the transition, and $\gamma \in (0,1]$ is the discount factor. Note that both the transition kernel and instantaneous cost depend on the environment parameter $\zeta$; we assume this dependence is continuously differentiable ($C^1$) in $\zeta$.

At each time step $t$, the agent observes the state $s_t$, selects action $a_t \in \mathcal{A}$ according to a stochastic policy $\pi(a_t|s_t,\zeta)$, transitions to the next state $s_{t+1} \sim p_{\zeta}(\cdot|s_t, a_t)$, and incurs cost $c_{\zeta}(s_t, a_t, s_{t+1})$. The policy $\pi(a|s,\zeta)$ is explicitly conditioned on $\zeta$, since the optimal action selection depends on the underlying MDP configuration. We denote by $p_{\pi,\zeta}$ the distribution over the {\em paths} $(\hdots,s_t,a_t,s_{t+1},a_{t+1},\hdots)$ of the MDP induced by policy $\pi$, and transition kernel $p_{\zeta}$. The  objective is to minimize the expected cumulative cost over both the policy $\pi$ and parameter $\zeta$ under an initial state distribution $\mu$ over $\mathcal{S}$:
\begin{align}\label{eq:objective}
\min_{\pi,\zeta}J(\pi, \zeta) := \sum_{s\in\mathcal{S}}\mu(s) V_{\pi,\zeta}(s),
\end{align}
where the state-value function $V_{\pi,\zeta}(s)$ is given by
\begin{align}
V_{\pi,\zeta}(s) = \mathbb{E}_{p_{\pi,\zeta}}\left[\sum_{t=0}^{\infty}
\gamma^t c_{\zeta}(s_t, a_t, s_{t+1}) \,\Big|\, s_0 = s\right].
\end{align}

\begin{remark}\label{rem:wellposed}
$V_{\pi,\zeta}(s)$ and $J(\pi,\zeta)$ are well-posed, i.e., finite for the policies under consideration, whenever $\gamma<1$ and the instantaneous cost is finite. In the undiscounted stochastic shortest path (SSP) setting ($\gamma{=}1$), well-posedness holds under standard conditions such as existence of a cost-free termination state and the proper policies \cite{bertsekas1996neuro}. We assume the appropriate conditions hold true wherever required.\blockend
\end{remark}

The above value function satisfies the Bellman recursion
\begin{subequations}\label{eq:VQ}
\begin{align}
V_{\pi,\zeta}(s) &{=}\sum_{a\in\mathcal{A}} \pi(a|s,\zeta)\,Q_{\pi,\zeta}(s,a), \label{eq:VinQ}\\
Q_{\pi,\zeta}(s,a) &{=}\sum_{s'\in\mathcal{S}} p_{\zeta}(s'|s,a)\Big(c_{\zeta}(s,a,s') {+} \gamma V_{\pi,\zeta}(s')\Big), \label{eq:OriginalQ}
\end{align}
\end{subequations}
where $Q_{\pi,\zeta}(s,a)$ is referred to as the action-value function. For a fixed $\zeta$, the minimization over $\pi$ in~\eqref{eq:objective} reduces to a standard RL problem \cite{Sutton1998}. The key challenge addressed in this work is the minimization over $\zeta$, which requires $\nabla_{\zeta} J(\pi,\zeta)$.

\section{Environment Parameter Gradient Theorem}\label{sec: ProbSoln}
The policy gradient theorems are foundational results in RL. Their significance lies not merely in the existence of a gradient expression, but in a specific structural property: the resulting expression is an expectation with respect to the {\em same} trajectory distribution that defines the original objective, with no additional terms involving the transition kernel or its derivative. This is precisely what makes the gradient directly estimable from sampled trajectories, without explicit knowledge of the transition kernel \cite{sutton1999policy}. We seek a similar property for the environment parameter $\zeta$. More precisely, an expression for $\nabla_\zeta J(\pi,\zeta)$ that remains an expectation under the same trajectory distribution (as the objective), with no additional terms involving the transition kernel, its gradient, or the gradient of the instantaneous cost function -- and is therefore similarly amenable to model-free estimation. This is enabled through a generalized action-value function that formally separates the environment parameter into two copies, each playing a distinct role.

\begin{definition}\label{def:GenQ_func}
For a policy $\pi$, the generalized action-value function $\mathcal{Q}_{\pi,\xi}(s,a,\zeta)$ is defined as:
\begin{align}\label{eq:generalized_Q}
\mathcal{Q}_{\pi,\xi}(s,a,\zeta) {=} \sum_{s'} p_{\zeta}(s'|s,a)\Big(c_{\zeta}(s,a,s') {+} \gamma V_{\pi,\xi}(s')\Big),
\end{align}
where the copy $\zeta$, appearing as the third argument alongside $s,a$, is the {\em current-step parameter copy} that determines the transition kernel $p_\zeta(\cdot|s,a)$ and cost $c_\zeta(s,a,\cdot)$ at the $(s,a)$ pair, while the copy $\xi$, appearing in the subscript alongside $\pi$, is the {\em future-rollout parameter copy} that governs the environment over all
subsequent steps starting at $s'$. \blockend
\end{definition}

Note that when $\zeta = \xi$, the generalized $\mathcal{Q}_{\pi,\xi}(s,a,\zeta)$ (or, equivalently $Q_{\pi,\zeta}(s,a,\zeta)$) in \eqref{eq:generalized_Q} reduces to the standard action-value function $Q_{\pi,\zeta}(s,a)$ in~\eqref{eq:OriginalQ}. Thus, the state-value function $V_{\pi,\zeta}(s)$ in \eqref{eq:VinQ} is conveniently expressible in terms of the generalized action-value function:
\begin{align}\label{eq:VinGenQ}
V_{\pi,\zeta}(s) = \sum_{a\in\mathcal{A}}\pi(a|s,\zeta)\mathcal{Q}_{\pi,\zeta}(s,a,\zeta).
\end{align}

\begin{theorem}[Environment Parameter Gradient - EnvPG]\label{thm:epgt}
Let $\pi(\cdot|s,\zeta)$ be a differentiable stochastic policy and $\mathcal{Q}_{\pi,\xi}(s,a,\zeta)$ be the generalized action-value function in \eqref{eq:generalized_Q}. The gradient of the objective $J(\pi,\zeta)$ with respect to the environment parameter $\zeta$, evaluated at $\zeta = \bar{\zeta}$, is given by:
\begin{align}\label{eq:theorem_J}
&\nabla_\zeta J(\pi,\zeta)\Big|_{\zeta=\bar{\zeta}} =\mathbb{E}_{p_{\pi,\bar{\zeta}}}\Bigg[\sum_{t=0}^{\infty}\gamma^t 
\nabla_\zeta\Big(\mathcal{Q}_{\pi,\bar{\zeta}}(s_t,a_t,\zeta) + \nonumber \\
&\qquad \quad\ln\pi(a_t|s_t,\zeta) \cdot Q_{\pi,\bar{\zeta}}(s_t,a_t,\bar{\zeta})\Big)
\Big|_{\zeta=\bar{\zeta}}\,\Big|\,s_0 \sim \mu\Bigg].
\end{align}
\end{theorem}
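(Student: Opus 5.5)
The approach mirrors the proof of the stochastic policy gradient theorem. We differentiate the Bellman identity \eqref{eq:VinGenQ} at $\zeta=\bar\zeta$, obtain a one-step recursion for $\nabla_\zeta V_{\pi,\zeta}(s)$, and unroll it along trajectories drawn from $p_{\pi,\bar\zeta}$. The two-copy function $\mathcal{Q}_{\pi,\xi}(s,a,\zeta)$ is what makes this recursion clean. In $V_{\pi,\zeta}(s)=\sum_a\pi(a|s,\zeta)\mathcal{Q}_{\pi,\zeta}(s,a,\zeta)$ the parameter enters in three places: the policy, the current-step copy in $\mathcal{Q}$, and the future-rollout copy $\xi$. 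Since everything is $C^1$ and the spaces are finite, the chain rule splits the total derivative at $\bar\zeta$ into three parts:
\begin{align*}
\nabla_\zeta V_{\pi,\zeta}(s)\big|_{\bar\zeta} = \sum_a \Big[\nabla_\zeta \pi(a|s,\zeta)\big|_{\bar\zeta}\,\mathcal{Q}_{\pi,\bar\zeta}(s,a,\bar\zeta) + \pi(a|s,\bar\zeta)\,\nabla_\zeta \mathcal{Q}_{\pi,\bar\zeta}(s,a,\zeta)\big|_{\bar\zeta} + \pi(a|s,\bar\zeta)\,\nabla_\xi \mathcal{Q}_{\pi,\xi}(s,a,\bar\zeta)\big|_{\xi=\bar\zeta}\Big].
\end{align*}

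The first term is rewritten with the log-derivative trick, $\nabla\pi=\pi\nabla\ln\pi$. The first two terms then combine into $\sum_a\pi(a|s,\bar\zeta)\,g(s,a)$, where
\[
g(s,a):=\nabla_\zeta\big(\mathcal{Q}_{\pi,\bar\zeta}(s,a,\zeta)+\ln\pi(a|s,\zeta)\,Q_{\pi,\bar\zeta}(s,a,\bar\zeta)\big)\big|_{\bar\zeta},
\]
which is exactly the integrand in \eqref{eq:theorem_J}. For the third term we use Definition~\ref{def:GenQ_func} directly. The current-step copy is frozen, so only $V_{\pi,\xi}(s')$ depends on $\xi$, and
\[
\nabla_\xi\mathcal{Q}_{\pi,\xi}(s,a,\bar\zeta)=\gamma\sum_{s'}p_{\bar\zeta}(s'|s,a)\,\nabla_\xi V_{\pi,\xi}(s')\big|_{\bar\zeta}.
\]
This step is the whole point of the decoupling: no derivative of $p$ or $c$ appears in the recursive part. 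The result is the sensitivity Bellman equation
\[
\nabla V(s)=\mathbb{E}_{a\sim\pi(\cdot|s,\bar\zeta)}\Big[g(s,a)+\gamma\,\mathbb{E}_{s'\sim p_{\bar\zeta}(\cdot|s,a)}\nabla V(s')\Big],
\]
which is a policy-evaluation equation with vector-valued "cost" $g$ under the fixed MDP $\mathcal{M}(\bar\zeta)$ and policy $\pi$.

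Next we unroll this recursion $T$ times, giving
\[
\nabla V(s)=\mathbb{E}_{p_{\pi,\bar\zeta}}\Big[\sum_{t<T}\gamma^t g(s_t,a_t)\,\Big|\,s_0=s\Big]+\gamma^T\,\mathbb{E}\big[\nabla V(s_T)\big].
\]
We then let $T\to\infty$, average over $s_0\sim\mu$, and use the linearity $\nabla J=\sum_s\mu(s)\nabla V(s)$ to obtain \eqref{eq:theorem_J}.

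The main obstacle is justifying this limit and the interchange of expectation and infinite sum. For $\gamma<1$ on finite spaces, $\nabla V$ is bounded, since it is the unique solution of a linear system with a contraction. The remainder therefore vanishes, and the series $\sum_t \gamma^t g$ converges absolutely by dominated convergence.

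In the SSP case $\gamma=1$, we invoke the standing assumptions of Remark~\ref{rem:wellposed}. Under a proper policy the transient part of $P_\pi$ has spectral radius below one, so $(I-P_\pi)^{-1}$ exists. Differentiability of $V$ then follows from the implicit function theorem applied to the linear Bellman system, and $\mathbb{E}[\nabla V(s_T)]\to0$ because $\nabla V$ vanishes at the cost-free termination state. To stay on firm ground, we would first show differentiability of $V_{\pi,\zeta}$ from the closed form $V=(I-\gamma P_{\pi,\zeta})^{-1}c_{\pi,\zeta}$, and only then carry out the chain-rule decomposition above.
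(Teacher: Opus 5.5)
Your proposal is correct and follows the paper's proof essentially step for step. Both arguments differentiate \eqref{eq:VinGenQ} at $\bar\zeta$, separate the current-step sensitivity of $\mathcal{Q}$ from the future-rollout term $\gamma\sum_{s'}p_{\bar\zeta}(s'|s,a)\nabla_\zeta V_{\pi,\zeta}(s')$, apply the log-derivative trick, and solve the resulting Bellman-type sensitivity recursion by accumulating $g$ along trajectories of $p_{\pi,\bar\zeta}$. Your explicit finite-$T$ unrolling with a vanishing remainder, and your plan to establish differentiability of $V_{\pi,\zeta}$ from $(I-\gamma P_{\pi,\zeta})^{-1}c_{\pi,\zeta}$ before applying the chain rule, add rigor that the paper only asserts (``such a recursion is solved by\ldots'').
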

\begin{proof}
\noindent Differentiating state-value function $V_{\pi,\zeta}(s,a)$ in \eqref{eq:VinGenQ} with respect to $\zeta$ and evaluating at $\zeta = \bar{\zeta}$:
\begin{align}\label{eq:gradVZetaFirst}
\nabla_\zeta V_{\pi,\zeta}(s)\Big|_{\zeta=\bar{\zeta}} 
&= \sum_a \nabla_\zeta \pi(a|s,\zeta)\big|_{\zeta=\bar{\zeta}} 
\mathcal{Q}_{\pi,\bar{\zeta}}(s,a,\bar{\zeta}) \nonumber\\
&\quad + \sum_a \pi(a|s,\bar{\zeta}) \nabla_\zeta 
\mathcal{Q}_{\pi,\zeta}(s,a,\zeta)\big|_{\zeta=\bar{\zeta}}.
\end{align}
For the second term in (\ref{eq:gradVZetaFirst}), we expand $\nabla_\zeta \mathcal{Q}_{\pi,\zeta}(s,a,\zeta)\big|_{\zeta=\bar{\zeta}}$ using its definition in \eqref{eq:generalized_Q} and the  chain rule that result into
\begin{align}\label{eq:GenQGrad_zeta}
\nabla_\zeta \mathcal{Q}_{\pi,\zeta}(s,a,& \zeta)\big|_{\zeta=\bar{\zeta}} 
= \nabla_\zeta \mathcal{Q}_{\pi,\bar{\zeta}}(s,a,\zeta)\big|_{\zeta=\bar{\zeta}} \nonumber\\
&\qquad + \gamma\sum_{s'} p_{\bar{\zeta}}(s'|s,a) \nabla_\zeta V_{\pi,\zeta}(s')\big|_{\zeta=\bar{\zeta}},
\end{align}
Substituting \eqref{eq:GenQGrad_zeta} back into \eqref{eq:gradVZetaFirst} we obtain
\begin{align}
&\nabla_\zeta V_{\pi,\zeta}(s)\Big|_{\zeta=\bar{\zeta}} = \sum_a \nabla_\zeta \pi(a|s,\zeta)\big|_{\zeta=\bar{\zeta}} \mathcal{Q}_{\pi,\bar{\zeta}}(s,a,\bar{\zeta}) \nonumber\\
&\qquad\qquad \qquad ~ + \sum_a \pi(a|s,\bar{\zeta}) \nabla_\zeta \mathcal{Q}_{\pi,\bar{\zeta}}(s,a,\zeta)\big|_{\zeta=\bar{\zeta}} \nonumber\\
&\qquad +\gamma\sum_a \pi(a|s,\bar{\zeta}) \sum_{s'} p_{\bar{\zeta}}(s'|s,a) \nabla_\zeta V_{\pi,\zeta}(s')\big|_{\zeta=\bar{\zeta}}.
\end{align}
Applying the log-derivative trick to the first term, i.e., $\nabla_\zeta \pi(a|s,\zeta) = \pi(a|s,\zeta) \nabla_\zeta \ln\pi(a|s,\zeta)$, and combining the first two terms we obtain
\begin{align}
\nabla_\zeta V_{\pi,\zeta}(&s)\Big|_{\zeta=\bar{\zeta}} = \sum_a \pi(a|s,\bar{\zeta})\Big[ \nabla_\zeta\Big(\mathcal{Q}_{\pi,\bar{\zeta}}(s,a,\zeta)\nonumber\\
&\qquad \quad ~ + \ln\pi(a|s,\zeta)\mathcal{Q}_{\pi,\bar{\zeta}}(s,a,\bar{\zeta})\Big)\Big|_{\zeta=\bar{\zeta}}\nonumber\\
&\qquad + \gamma\sum_{s'} p_{\bar{\zeta}}(s'|s,a) 
\nabla_\zeta V_{\pi,\zeta}(s')\Big|_{\zeta=\bar{\zeta}}\Big].
\end{align}
This recursion has the same structure as the ordinary Bellman equation for $V_{\pi,\zeta}(s)$ in \eqref{eq:VQ}, where the bracketed term involving $\mathcal{Q}$ plays the role of the instantaneous cost. Such a recursion is solved by the expectation of these $(\mathcal{Q}$ dependent) instantaneous cost accumulated along trajectories generated under $\pi$ and $p_{\bar\zeta}$, i.e., under the trajectory distribution $p_{\pi,\bar\zeta}$. This gives
\begin{align}\label{eq:theorem}
&\nabla_\zeta V_{\pi,\zeta}(s)\Big|_{\zeta=\bar{\zeta}} = \mathbb{E}_{p_{\pi,\bar{\zeta}}}\Bigg[\sum_{t=0}^{\infty}\gamma^t 
\nabla_\zeta\Big(\mathcal{Q}_{\pi,\bar{\zeta}}(s_t,a_t,\zeta) + \nonumber\\ 
&\qquad\quad\ln\pi(a_t|s_t,\zeta) \cdot \mathcal{Q}_{\pi,\bar{\zeta}}(s_t,a_t,\bar{\zeta})\Big)\Big|_{\zeta=\bar{\zeta}}\,\Big|\,s_0 = s\Bigg],
\end{align}
The gradient of $J(\pi,\zeta)$ in \eqref{eq:theorem_J} follows immediately by taking the expectation of $\nabla_\zeta V_{\pi,\zeta}(s)\big|_{\zeta=\bar{\zeta}}$ over $s \sim \mu$. This expression is well-defined under regularity conditions analogous to those in Remark~\ref{rem:wellposed}, for both the discounted ($\gamma<1$) as well as the SSP ($\gamma=1$) settings.
\end{proof}


The gradient expression in Theorem~\ref{thm:epgt} consists of two structurally distinct terms, each capturing a different channel through which the environment parameter influences the objective. The first term, $\nabla_\zeta \mathcal{Q}_{\pi,\bar{\zeta}}(s_t,a_t,\zeta) \big|_{\zeta=\bar{\zeta}}$, captures the direct sensitivity of the generalized action-value function to current-step parameter copy (that governs the instantaneous cost and transition dynamics at the $t$-th step), with the future rollout parameter copy frozen at $\bar{\zeta}$. The second term, $\nabla_\zeta \ln\pi(a_t|s_t,\zeta)\cdot \mathcal{Q}_{\pi,\bar{\zeta}}(s_t,a_t,\bar{\zeta})$ captures the sensitivity of the log-policy to the environment parameter weighted by generalized action-value function where both the current-step and future rollout environment parameter copies are fixed at $\bar{\zeta}$. Both terms are estimable without a model of the environment, i.e., the first is obtained by differentiating a critic network that estimates $\mathcal{Q}_{\pi,\xi}(s,a,\zeta)$, and the second from the score of an actor network that represents the policy $\pi(a|s,\zeta)$ together with the critic's value; both critic and actor can be learned from sampled trajectories as discussed in  Section~\ref{sec: solution_methodology}. Since the gradient expression in Theorem~\ref{thm:epgt} is an expectation of the discounted sum of these terms along trajectories under the distribution $p_{\pi,\bar\zeta}$, its stochastic estimate can be obtained by rolling out trajectories under $\pi$ and $p_{\bar\zeta}$.


\section{Solution Methodology}\label{sec: solution_methodology}
\subsection{Entropy Regularization}
To encourage exploration and improve training, we augment the  objective \eqref{eq:objective} with a common entropy regularization term \cite{haarnoja2018soft,Fox2016Glearning}, yielding the entropy-regularized objective:
\begin{align}\label{eq:entropy_objective}
J_{\mathrm{H}}(\pi,\zeta){=}\mathbb{E}_{p_{\pi,\zeta}}\Big[\sum_{t=0}^{\infty}\hspace{-0.1cm}\gamma^t\big(c_{t,\zeta}{+}\alpha\ln\pi(a_t|s_t,\zeta)\big){\big|}s_0{\sim}\mu\Big]
\end{align}
where $c_{t,\zeta} = c_{\zeta}(s_t,a_t,s_{t+1})$, and $\alpha > 0$ is a temperature parameter. Applying the EnvPG Theorem \ref{thm:epgt} to the regularized objective (\ref{eq:entropy_objective}) yields the following gradient expression:
\begin{align}\label{eq:entropy_envpg}
&\nabla_\zeta J_{\mathrm{H}}(\pi,\zeta)\Big|_{\zeta=\bar{\zeta}} = \mathbb{E}_{p_{\pi,\bar{\zeta}}}\bigg[\sum_{t=0}^{\infty}\gamma^t \nabla_\zeta\Big(\mathcal{Q}^{\mathrm{H}}_{\pi,\bar{\zeta}}(s_t,a_t,\zeta)+\nonumber \\
&\quad\ln\pi(a_t|s_t,\zeta) \cdot \mathcal{Q}^{\mathrm{H}}_{\pi,\bar{\zeta}}(s_t,a_t,\bar{\zeta})\Big)\Big|_{\zeta=\bar{\zeta}}\,\Big|\,s_0\sim\mu\bigg], 
\end{align}
\noindent where the generalized entropy-regularized action-value function $\mathcal{Q}^{\mathrm{H}}_{\pi,\xi}(s,a,\zeta)$ is defined analogously to Definition \ref{def:GenQ_func}:
\begin{align}\label{eq:entReg_Q}
\mathcal{Q}^{\mathrm{H}}_{\pi,\xi}(s,a,\zeta){=}\hspace{-0.12cm}\sum_{s'} p_{\zeta}(s'|s,a)\big(c^{\mathrm{H}}_{\zeta}(s,a,s'){+} \gamma V_{\pi,\xi}^{\mathrm{H}}(s')\big),
\end{align}
where $c^{\mathrm{H}}_{\zeta}(s,a,s')=c_{\zeta}(s,a,s')+\alpha \ln\pi(a|s,\zeta)$, and the current-step and the future rollout parameter copies are denoted by $\zeta$ and $\xi$, respectively. Here, $V^{\mathrm{H}}_{\pi,\xi}(s')$ denotes the entropy regularized state value function. 

Let $\pi_\theta$ denote a policy parameterized by $\theta$. The gradient of $J_{\mathrm{H}}(\pi_\theta,\zeta)$ with respect to $\theta$ follows from the standard entropy-regularized policy gradient theorem \cite{haarnoja2018soft}:
\begin{align}\label{eq:entropy_spg}
\nabla_{\theta}J_{\mathrm{H}}(\pi_\theta,\zeta) &= \mathbb{E}_{p_{\pi_\theta,\zeta}}
\Big[\sum_{t=0}^{\infty}\gamma^t \Big(\mathcal{Q}^{\mathrm{H}}_{\pi_\theta,\zeta}(s_t,a_t,\zeta)+\alpha\Big)\nonumber\\
&\qquad \qquad \quad \nabla_\theta \log\pi_{\theta}(a_t|s_t,\zeta)\Big|s_0\sim \mu\Big],
\end{align}
where the additive $\alpha$ arises from the entropy term in (\ref{eq:entropy_objective}).

\begin{remark}
We observe that under {\em direct parameterization} of the policy, i.e., a separate policy parameter $\theta_{s,a}^{\zeta}$ ($=:\pi(a|s,\zeta)$) for each state--action pair, the gradient expression of Theorem~\ref{thm:epgt} reduces to a single term. Here, the optimal policy for the entropy-regularized objective~\eqref{eq:entropy_objective} is known to be the Gibbs distribution \cite{9517030,Fox2016Glearning}:
\begin{align}\label{eq:gibbs_policy}
\pi^*(a|s,\zeta) = \frac{\exp\{-\frac{1}{\alpha}\mathcal{Q}^{\mathrm{H}}_{\pi^*,\zeta}(s,a,\zeta)\}}
{\sum_{a'\in\mathcal{A}}\exp\{-\frac{1}{\alpha}\mathcal{Q}^{\mathrm{H}}_{\pi^*,\zeta}(s,a',\zeta)\}},
\end{align}
which upon substitution in the value function $V_{\pi^*,\zeta}^{\mathrm{H}}(s)=\sum_a \pi^*(a|s,\zeta)Q_{\pi^*,\zeta}^{\mathrm{H}}(s,a,\zeta)$ yields the soft-min expression:
\begin{align}\label{eq:softmin}
V^{\mathrm{H}}_{\pi^*,\zeta}(s) = -\alpha\log\bigg({\sum_{a\in\mathcal{A}}}\hspace{-0.1cm}\exp\bigg({-}\frac{\mathcal{Q}^{\mathrm{H}}_{\pi^*,\zeta}(s,a,\zeta)}{\alpha}\bigg)\bigg). \hfill\text{\blockend}
\end{align}
\end{remark}
Subsequently, the gradient in (\ref{eq:entropy_envpg}) simplifies as follows.
\begin{corollary}\label{cor:gibbs}
For the objective $J_{\mathrm{H}}(\pi,\zeta)$ in (\ref{eq:entropy_objective}), with optimal policy $\pi^*$ in (\ref{eq:gibbs_policy}), the gradient $\nabla_\zeta J_{\mathrm{H}}(\pi^*,\zeta)|_{\zeta=\bar{\zeta}} = $
\begin{align}\label{eq:gibbs_gradient} 
\mathbb{E}_{p_{\pi^*\bar{\zeta}}}\Big[\sum_{t=0}^{\infty}\gamma^t 
\nabla_\zeta \mathcal{Q}^{\mathrm{H}}_{\pi^*,\bar{\zeta}}(s_t,a_t,\zeta)
\Big|_{\zeta=\bar{\zeta}}\,\Big|\,s_0\sim\mu\Big].
\end{align}
\end{corollary}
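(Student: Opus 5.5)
Starting from the entropy-regularized EnvPG expression \eqref{eq:entropy_envpg} with $\pi=\pi^*$, I will show that, at every state, the score-function contribution (together with the entropy term that $\ln\pi$ adds to $c^{\mathrm{H}}$) vanishes in $\pi^*$-expectation. Then only the first term survives. The per-timestep summand in \eqref{eq:entropy_envpg} is a function of $(s_t,a_t)$. So by the tower property it suffices to show, for each fixed $s$,
\begin{align*}
\sum_a \pi^*(a|s,\bar\zeta)\Big[\nabla_\zeta \mathcal{Q}^{\mathrm{H}}_{\pi^*,\bar\zeta}(s,a,\zeta)+\nabla_\zeta\ln\pi^*(a|s,\zeta)\,\mathcal{Q}^{\mathrm{H}}_{\pi^*,\bar\zeta}(s,a,\bar\zeta)\Big]_{\zeta=\bar\zeta}
=\sum_a \pi^*(a|s,\bar\zeta)\,\widetilde{\nabla}_\zeta \mathcal{Q}^{\mathrm{H}}_{\pi^*,\bar\zeta}(s,a,\zeta)\big|_{\zeta=\bar\zeta},
\end{align*}
where $\widetilde{\nabla}$ is the derivative through $p_\zeta$ and $c_\zeta$ only. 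The identification of this right-hand side with the term appearing in \eqref{eq:gibbs_gradient} is discussed below.

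The first step is to split $\nabla_\zeta\mathcal{Q}^{\mathrm{H}}_{\pi^*,\bar\zeta}(s,a,\zeta)$. By \eqref{eq:entReg_Q}, $\zeta$ enters $c^{\mathrm{H}}_\zeta$ both through $c_\zeta$ and through $\alpha\ln\pi^*(a|s,\zeta)$. Since $\sum_{s'}p_\zeta(s'|s,a)=1$, the entropy piece contributes exactly $\alpha\nabla_\zeta\ln\pi^*(a|s,\zeta)$ to the gradient. The whole score-type contribution is therefore $\nabla_\zeta\ln\pi^*(a|s,\zeta)\,\big(\mathcal{Q}^{\mathrm{H}}_{\pi^*,\bar\zeta}(s,a,\bar\zeta)+\alpha\big)$. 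The second step uses the Gibbs form \eqref{eq:gibbs_policy}: on the support, $\alpha\ln\pi^*(a|s,\bar\zeta)=-\mathcal{Q}^{\mathrm{H}}(s,a)+V^{\mathrm{H}}(s)$, with $V^{\mathrm{H}}$ the soft-min \eqref{eq:softmin}. Here $\mathcal{Q}^{\mathrm{H}}$ is meant as the Gibbs ``energy'', i.e.\ the soft $Q$ excluding the current-step log term. Hence $\mathcal{Q}^{\mathrm{H}}+\alpha = V^{\mathrm{H}}(s)-\alpha\ln\pi^*+\alpha$. The third step takes the $\pi^*$-average using $\sum_a\nabla_\zeta\pi^*(a|s,\zeta)=0$. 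The state-only terms $V^{\mathrm{H}}(s)+\alpha$ vanish, and what remains is $-\alpha\sum_a\nabla_\zeta\pi^*\ln\pi^*$.

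The main obstacle is this leftover term $-\alpha\sum_a \nabla_\zeta\pi^*\ln\pi^*=\alpha\nabla_\zeta H(\pi^*(\cdot|s,\zeta))$ (up to sign), together with the bookkeeping of which $\mathcal{Q}^{\mathrm{H}}$ (with or without the current log term) enters \eqref{eq:gibbs_policy}. I expect this to resolve by the envelope/stationarity property of $\pi^*$. Under direct parameterization, $\pi^*(\cdot|s)$ minimizes $\sum_a\pi(a)\big(\widetilde{\mathcal{Q}}(s,a)+\alpha\ln\pi(a)\big)$ over the simplex, where $\widetilde{\mathcal{Q}}$ is the log-free soft $Q$. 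Its first-order condition says that $\widetilde{\mathcal{Q}}(s,a)+\alpha\ln\pi^*(a)+\alpha$ is constant in $a$. Any perturbation $\nabla_\zeta\pi^*$ is tangent to the simplex, so it has zero first-order effect on this inner objective. Equivalently, differentiating the soft-min \eqref{eq:softmin} gives $\nabla V^{\mathrm{H}}=\sum_a\pi^*\,\widetilde{\nabla}\mathcal{Q}^{\mathrm{H}}$ directly, which is Danskin's theorem. I would therefore present the proof via this envelope argument: write $V^{\mathrm{H}}_{\pi^*,\zeta}(s)=\min_{\pi(\cdot|s)}\sum_a\pi(a)\big(\widetilde{\mathcal{Q}}_{\pi^*,\zeta}(s,a,\zeta)+\alpha\ln\pi(a)\big)$. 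Differentiating at the minimizer gives a per-state identity in which all $\nabla_\zeta\pi^*$ terms cancel, so $\nabla_\zeta\ln\pi^*$ makes no net contribution. Substituting this state-wise identity into the recursion used in the proof of Theorem~\ref{thm:epgt} and unrolling it exactly as there yields \eqref{eq:gibbs_gradient}.
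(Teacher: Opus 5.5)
Your final argument is the paper's proof, and it is correct: differentiate the soft-min \eqref{eq:softmin} (Danskin/envelope) to get $\nabla_\zeta V^{\mathrm{H}}_{\pi^*,\zeta}(s)=\sum_a\pi^*(a|s,\bar\zeta)\nabla_\zeta\mathcal{Q}^{\mathrm{H}}_{\pi^*,\zeta}(s,a,\zeta)$, then split the two copies as in \eqref{eq:GenQGrad_zeta} and unroll. You are also right that \eqref{eq:gibbs_policy}--\eqref{eq:softmin} only make sense with the log-free energy, a point the paper glosses over.

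Do not, however, present the envelope property as ``resolving'' your leftover $-\alpha\sum_a\nabla_\zeta\pi^*\ln\pi^*=\alpha\nabla_\zeta H(\pi^*(\cdot|s,\zeta))$, which is generally nonzero. That leftover comes from putting the log-free energy into the score weight. With the log-inclusive $\mathcal{Q}^{\mathrm{H}}$ of \eqref{eq:entReg_Q} that actually appears in \eqref{eq:entropy_envpg}, one has $\mathcal{Q}^{\mathrm{H}}_{\pi^*,\bar\zeta}(s,a,\bar\zeta)=V^{\mathrm{H}}_{\pi^*,\bar\zeta}(s)$ for every $a$. The score term therefore has zero $\pi^*$-mean at each state, and your first route already closes without any leftover. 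Likewise, the $\alpha\nabla_\zeta\ln\pi^*$ piece inside $\nabla_\zeta\mathcal{Q}^{\mathrm{H}}$ has zero $\pi^*$-mean, so either reading of $\mathcal{Q}^{\mathrm{H}}$ yields \eqref{eq:gibbs_gradient}.
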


\begin{proof}
Differentiating the soft-min value function \eqref{eq:softmin} with respect to $\zeta$ and evaluating at $\zeta=\bar{\zeta}$:
\begin{align}\label{eq:EntRegVal_s_Grad}
\hspace{-0.5cm}\nabla_\zeta V^{\mathrm{H}}_{\pi^*,\zeta}(s)\big|_{\zeta=\bar{\zeta}} {=} 
{\sum_{a\in\mathcal{A}}}\pi^*(a|s,\bar{\zeta})
\nabla_\zeta \mathcal{Q}^{\mathrm{H}}_{\pi^*,\zeta}(s,a,\zeta)\big|_{\zeta=\bar{\zeta}}, \hspace{-0.5cm}
\end{align}
where $\nabla_\zeta \mathcal{Q}^{\mathrm{H}}_{\pi^*,\zeta}(s,a,\zeta)\big|_{\zeta=\bar{\zeta}}$ can be evaluated in a manner analogous to the $\nabla_\zeta \mathcal{Q}_{\pi,\zeta}(s,a,\zeta)\big|_{\zeta=\bar{\zeta}}$ in (\ref{eq:GenQGrad_zeta}). When substituted back in (\ref{eq:EntRegVal_s_Grad}), we obtain the expression (\ref{eq:gibbs_gradient}).
\end{proof}

\begin{algorithm}[t]
\caption{EnvPG Actor-Critic Algorithm}
\label{alg:envpg_ac}
\textbf{Input:} $\mathcal{S},\mathcal{A}$, $\gamma$, actor network $f_\theta$, critic network $\mathcal{Q}_\phi$, step sizes $\{\iota_k\}$, $\{\beta_k\}$, $\{\eta_k\}$, temperature $\alpha$, annealing rate $\upsilon<1$, episodes per iteration $M$, max outer iteration $k_{\max}$, variance $\sigma^2$, batch size $N$\\
\textbf{Initialize:} $\theta_0$, $\phi_0$, $\phi^- \leftarrow \phi_0$, $\zeta_0$, $k \leftarrow 0$\\
\While{$k\leq k_{\max}$}{
    Replay buffer $\mathcal{D}_k \leftarrow \emptyset$\\
    \For{episode $= 1$ \textbf{to} $M$}{
        Sample initial state $s_0$, $t \leftarrow 0$\\
        \While{until termination}{
            sample $a_t \sim 
            \pi_{\theta_k}(\cdot | s_t,\zeta_k)$\\
            $\bm{\epsilon}_{\mathrm{nb}} \sim \mathcal{N}(0, \sigma^2 \mathbf{I})$; $\zeta_{\text{pert}} \leftarrow \zeta_k + \bm{\epsilon}_{\text{nb}}$\\
            take action $a_t$ with environment at $\zeta_{\text{pert}}$\\ 
            observe $s_{t+1}$ and $c_t$\\
            $\mathcal{D}_k {\leftarrow }\mathcal{D}_k {\cup} \{(s_t,a_t,s_{t+1},c_t,\zeta_{\text{pert}})\}$\\
            $t \leftarrow t+1$\\
        }
    }
    sample $\{(s^{(i)},a^{(i)},s'^{(i)},c^{(i)},\zeta^{(i)})\}_{i=1}^N$ from $\mathcal{D}_k$\\
    compute TD target $y^{(i)}$ for all $i$ in \eqref{eq:td_target}\\
    update critic $\phi_{k+1}$ via $\nabla_{\phi}\mathcal{L}(\phi_k)$ (\ref{eq:critic_loss})\\
    update target $\phi^{-}$ via soft Polyak updates \cite{Sutton1998}\\    
    update actor $\theta_{k+1}$ via policy gradients in (\ref{eq:entropy_spg})\\
    \If{$k >$ warmup iterations}{
        resample a batch of trajectories from $\mathcal{D}_k$\\
        obtain {\small $\widehat{\nabla}_{\zeta}J_{\mathrm{H}}(\pi_{\theta_k},\zeta_k)$}, sample estimate of \eqref{eq:entropy_envpg}\\
        {$\zeta_{k+1}{\leftarrow}\zeta_k {-} \iota_k\widehat{\nabla}_{\zeta}J_{\mathrm{H}}(\pi_{\theta_k},\zeta_k)$},
    }
    $k \leftarrow k+1$; $\alpha\leftarrow \upsilon \alpha$
}
\textbf{Output:} $(\theta_{k_{\max}}, \zeta_{k_{\max}})$
\end{algorithm}

\subsection{EnvPG Actor--Critic Algorithm}
We now develop a model-free algorithm that jointly optimizes the policy and environment using the  gradients \eqref{eq:entropy_spg} and \eqref{eq:entropy_envpg}. Let $\pi_\theta(a|s,\zeta)$ denote the policy, parameterized by an actor network $f_\theta(s,a,\zeta)$ with weights $\theta$; any differentiable parameterization may be used. The actor is updated using the stochastic estimate of \eqref{eq:entropy_spg}.


Let $\mathcal{Q}_\phi(s,a,\zeta)$ denote the critic network that approximates the generalized action-value function in \eqref{eq:entReg_Q}. The gradient \eqref{eq:entropy_envpg} requires capturing how $\mathcal{Q}^{\mathrm{H}}_{\pi,\bar{\zeta}}(s,a,\zeta)$ varies with the current-step parameter copy $\zeta$, in a neighborhood of $\bar{\zeta}$. It does not require variation with respect to the future rollout copy; there, only the value at $\xi = \bar{\zeta}$ is needed. This asymmetry guides the learning of the critic $\mathcal{Q}_\phi(s,a,\zeta)$, that approximates $\mathcal{Q}_{\pi,\bar{\zeta}}^{\mathrm{H}}(s,a,\zeta)$. In particular, the critic's dependence on the current-step copy $\zeta$ is an explicit network input, while the dependence on the future rollout copy $\xi$ (which is fixed at $\bar{\zeta}$) and the policy $\pi$ is implicit.

\begin{figure*}[t]
\centering
\setlength{\tabcolsep}{0pt}
\begin{tabular*}{\textwidth}{@{\extracolsep{\fill}}ccccc@{}}
\includegraphics[height=3.0cm]{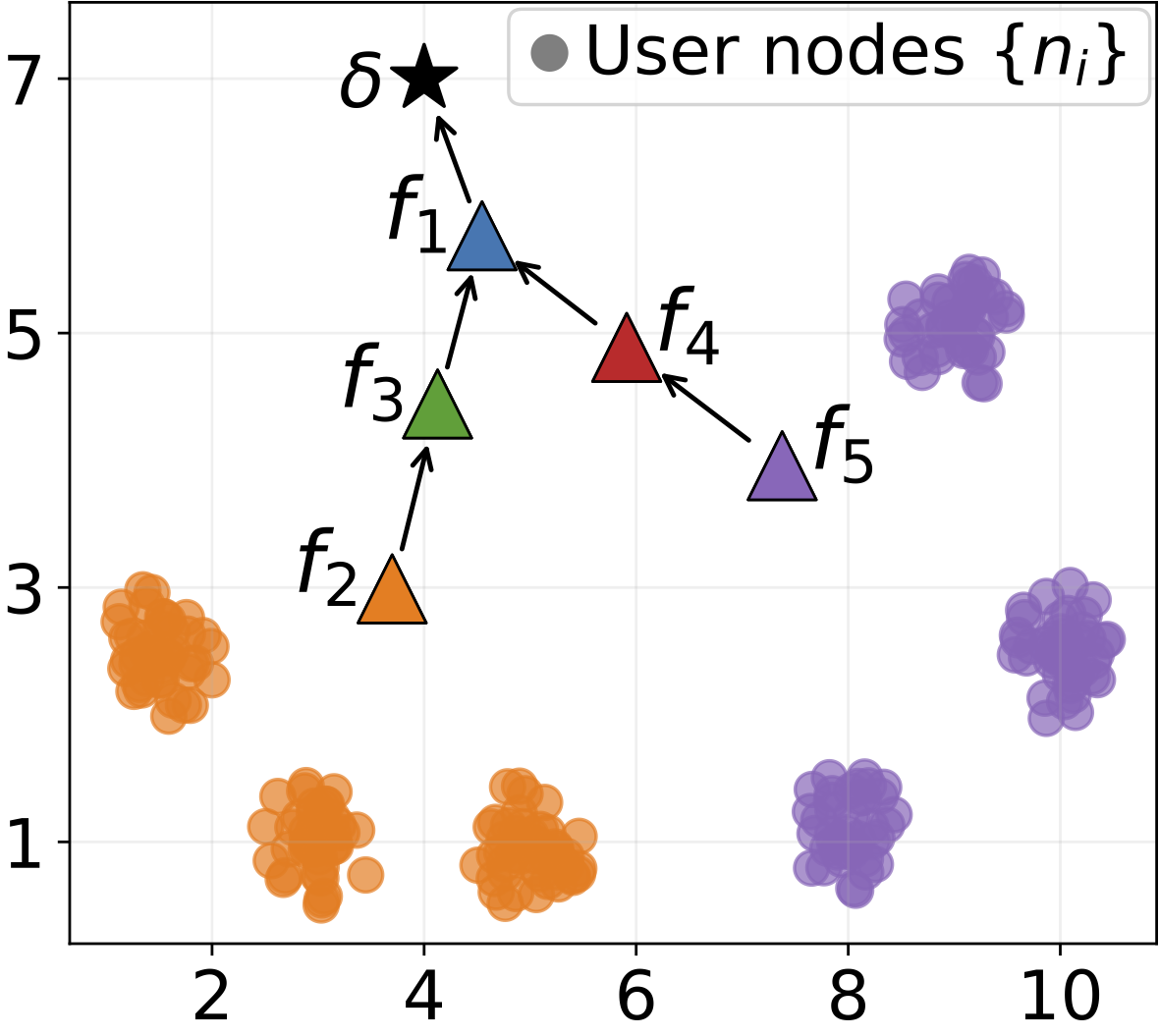} &
\includegraphics[height=3.0cm]{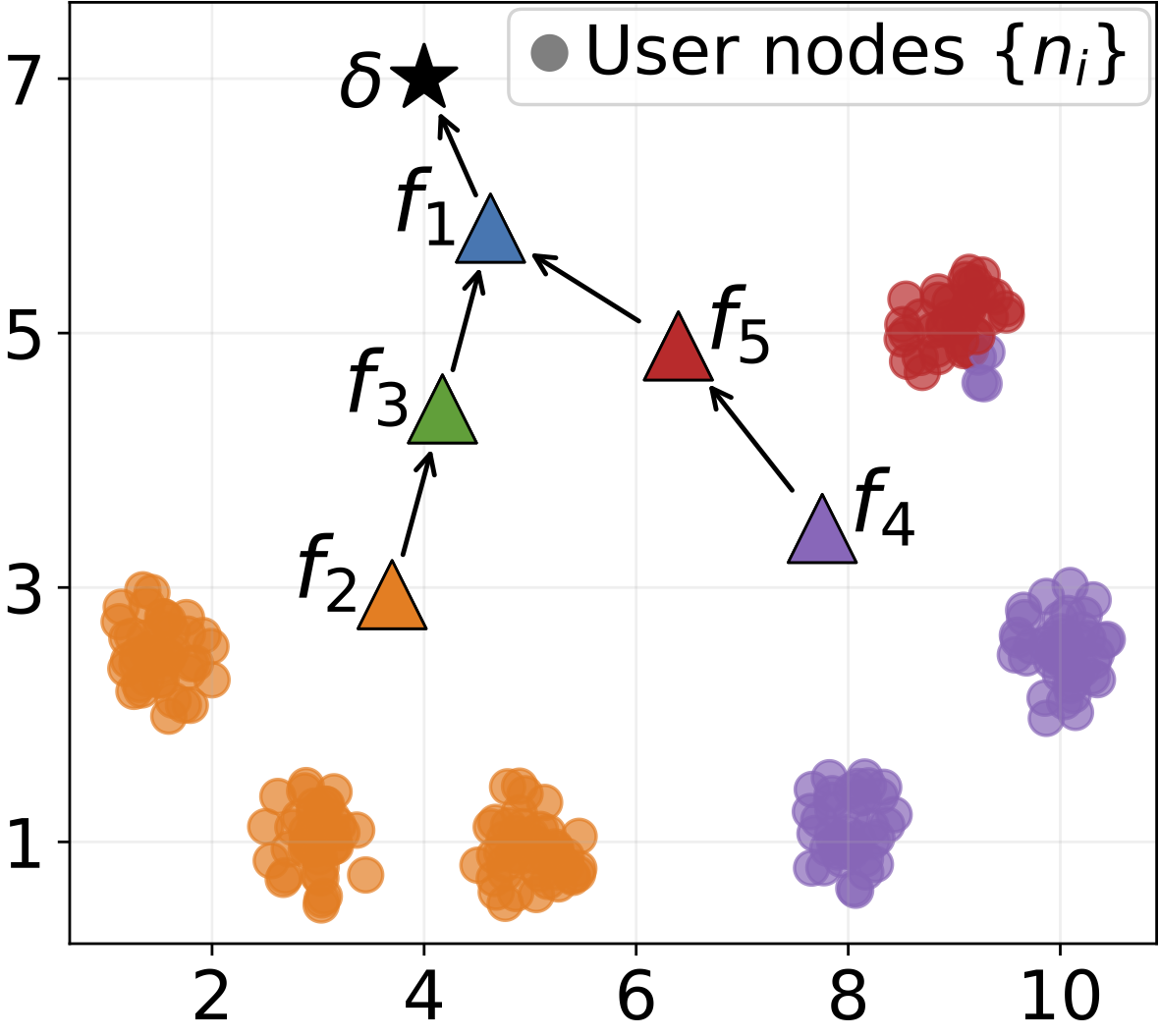}  &
\includegraphics[height=3.0cm]{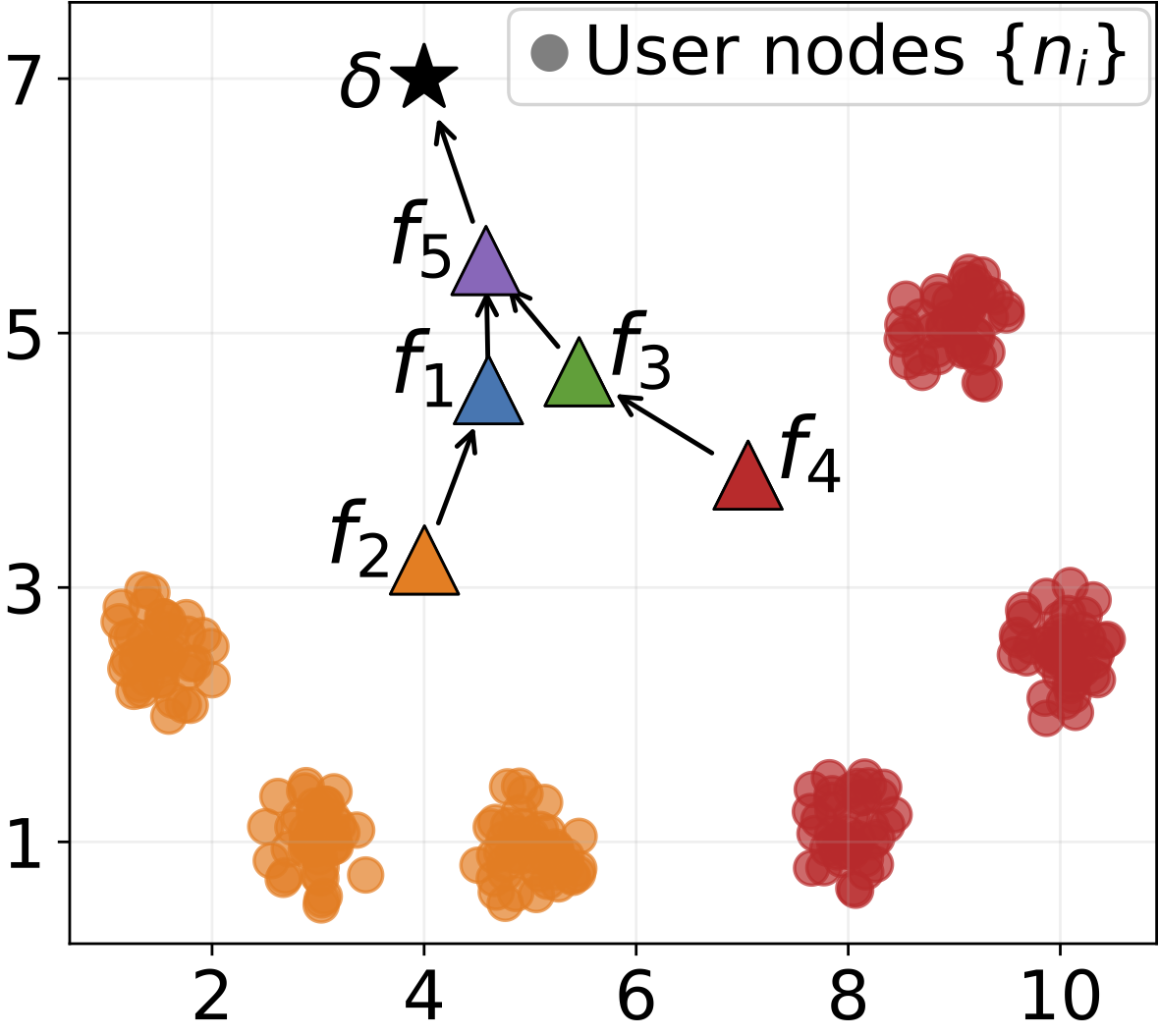} &
\includegraphics[height=3.0cm]{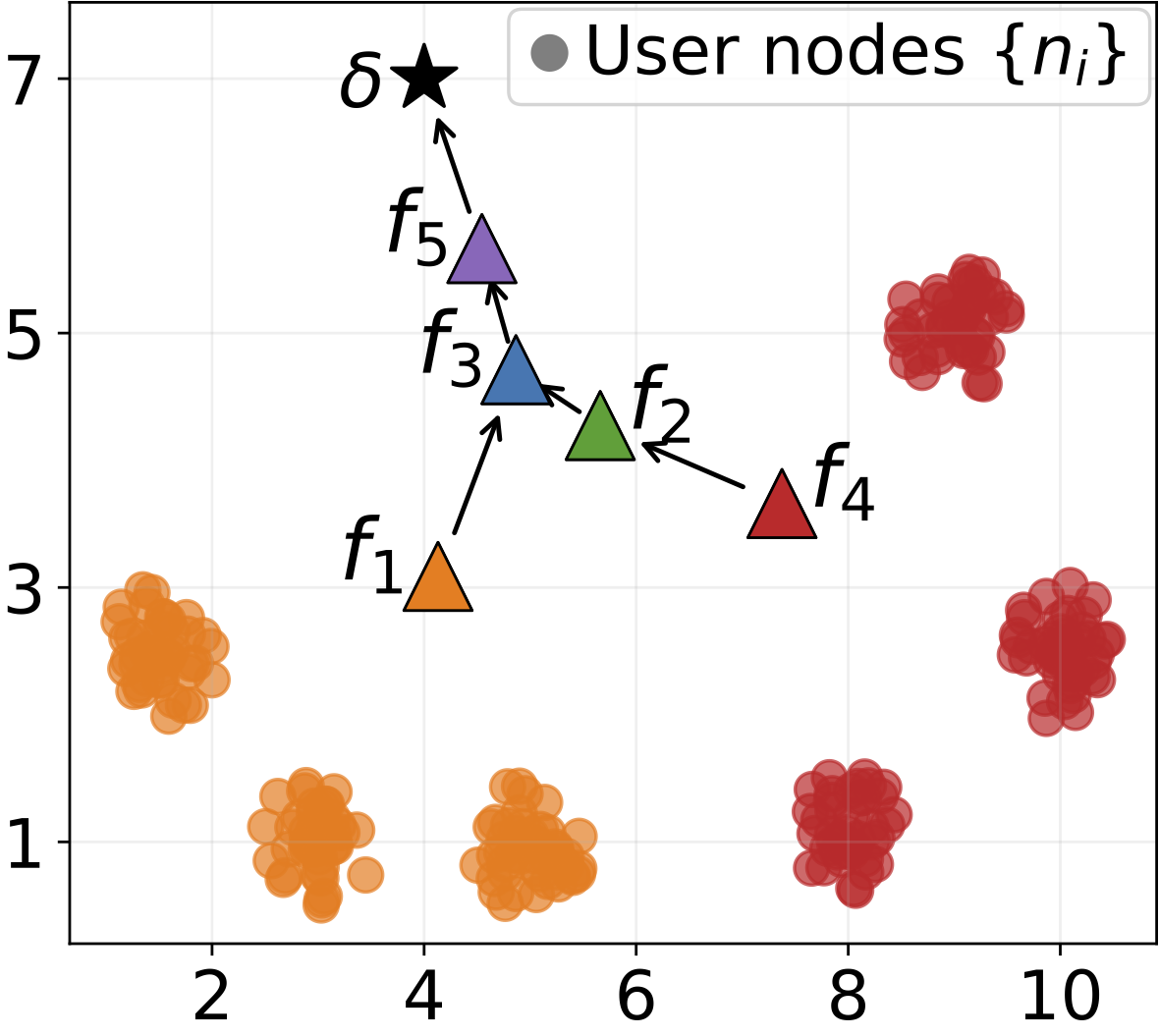} &
\includegraphics[height=3.0cm]{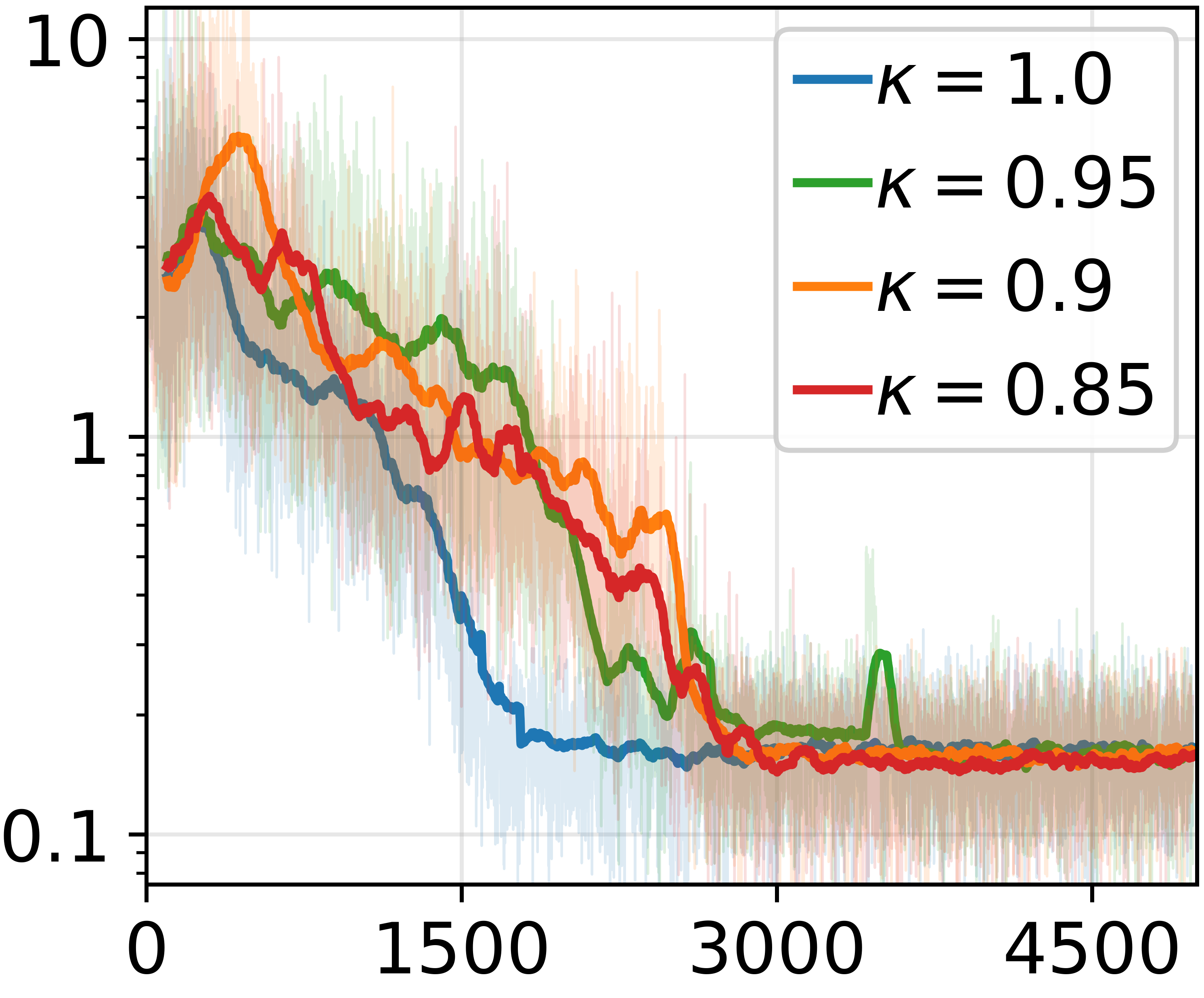}\\[-2pt]
{\footnotesize (a1) Alg.~\ref{alg:envpg_ac}, $\kappa = 1.0$} &
{\footnotesize (a2) Oracle, $\kappa = 1.0$} &
{\footnotesize (b1) Algorithm~\ref{alg:envpg_ac}, $\kappa=0.85$} &
{\footnotesize (b2) Oracle, $\kappa = 0.85$} &
{\footnotesize (c) $\|\nabla_\zeta J_{\mathrm{H}}\|$ vs iteration $k$} \\
\end{tabular*}
\caption{UAV placements and routing policy for Algorithm~\ref{alg:envpg_ac} and the model-based oracle for (a1)-(a2) $\kappa=1.0$, and (b1)-(b2) $\kappa=0.85$. User nodes are colored by their assigned UAV and arrows indicate routing under the optimal policy; (c) decaying moving average of $\|\nabla_\zeta J_{\mathrm{H}}\|$ in Algorithm~\ref{alg:envpg_ac}.}
\label{fig:network_designs}
\vspace{-0.5cm}
\end{figure*}

We train the critic by temporal-difference (TD) learning, building up on the methodology of deep Q-networks (DQN)~\cite{mnih2015human}. At the $k$-th iteration, with the environment parameter value $\zeta_k$ and the actor weight value $\theta_k$, we estimate the gradient $\nabla_\zeta J_{\mathrm{H}}(\pi_{\theta_k},\zeta)|_{\zeta=\zeta_k}$ in \eqref{eq:entropy_envpg}. To do this, we collect $N$ data points of the form $(s^{(i)}, a^{(i)}, s'^{(i)}, c^{(i)}, \zeta_{\text{pert}}^{(i)})$, with $a^{(i)} \sim \pi_{\theta_k}(\cdot|s^{(i)},\zeta_{\text{pert}}^{(i)})$, where $s'^{(i)}$ is the successor state reached when action $a^{(i)}$ is taken in state $s^{(i)}$, $c^{(i)}$ is the transition cost, $\zeta_{\text{pert}}^{(i)} = \zeta_k + \bm{\epsilon}_{\mathrm{nb}}$ is the perturbed environment parameter at which the transition is observed, and $\bm{\epsilon}_{\mathrm{nb}}\sim \mathcal{N}(0,\sigma^2\mathbf{I})$. For each data point, the TD target is
\begin{align}\label{eq:td_target}
y^{(i)} &= c^{(i)} + \alpha \log \pi_{\theta_k}(a^{(i)}|s^{(i)},\zeta_{\text{pert}}^{(i)})\nonumber \\
&\qquad + \gamma \sum_{a'\in\mathcal{A}} \pi_{\theta_k}(a'|s'^{(i)},\zeta_k)\,\mathcal{Q}_{\phi^-}(s'^{(i)},a',\zeta_k),
\end{align}
where $\mathcal{Q}_{\phi^-}$ is the target critic, a slowly-updated critic copy that stabilizes TD learning \cite{mnih2015human}. The first two terms of $y^{(i)}$ depend on the perturbed parameter $\zeta_{\text{pert}}^{(i)}$, while the last term depends on the baseline $\zeta_k$. This maps $\zeta_{\text{pert}}^{(i)}$ to the role of the current-step parameter copy and $\zeta_k$ to the role of the future rollout parameter copy. The critic parameters $\phi$ are obtained by minimizing the mean-squared error,
\begin{align}\label{eq:critic_loss}
\mathcal{L}(\phi)=\frac{1}{N}\sum_{i=1}^{N}\Big(\mathcal{Q}_{\phi}\big(s^{(i)},a^{(i)},\zeta_{\text{pert}}^{(i)}\big) - y^{(i)}\Big)^2,
\end{align}
as commonly done in TD-learning methods. Since $\zeta_{\text{pert}}^{(i)}$ is perturbed locally around $\zeta_k$ for every collected data point, the critic observes variation in its current-step environment parameter copy input, which permits $\nabla_{\zeta}\mathcal{Q}_{\phi}(s,a,\zeta)\big|_{\zeta=\zeta_k}$ to serve as an estimate of $\nabla_{\zeta}\mathcal{Q}^{\mathrm{H}}_{\pi,\zeta_k}(s,a,\zeta)\big|_{\zeta=\zeta_k}$ --- the first term of $\nabla_{\zeta}J_{\mathrm{H}}(\pi,\zeta)|_{\zeta=\zeta_k}$. Further, $\mathcal{Q}_{\phi}(s,a,\zeta_k)$ estimates $\mathcal{Q}^{\mathrm{H}}_{\pi,\zeta_k}(s,a,\zeta_k)$, required in the second term of $\nabla_{\zeta}J_{\mathrm{H}}(\pi,\zeta)|_{\zeta=\zeta_k}$ and the expression of $\nabla_{\theta}J_{\mathrm{H}}(\pi_{\theta},\zeta_k)$ \eqref{eq:entropy_spg}.

In our proposed model-free algorithm, $\alpha$ is gradually annealed to zero over several iterations. As a consequence, the entropy-regularized  $J_{\mathrm{H}}(\pi,\zeta)$ approaches the original objective $J(\pi,\zeta)$ in \eqref{eq:objective}, so the algorithm ultimately solves the co-design problem in \eqref{eq:objective}. See Algorithm \ref{alg:envpg_ac} for details.

\section{Simulation}\label{sec: simulation}
 
We demonstrate Algorithm~\ref{alg:envpg_ac} on a synthetic UAV-assisted relay network design problem. The network consists of $271$ fixed user nodes $\{n_i\}$ and a single destination node $\delta$ (see Figure~\ref{fig:network_designs}). The user nodes and the destination node are located at $\{x_i\in\mathbb{R}^2\}$ and $z\in\mathbb{R}^2$, respectively. A set of $5$ UAV relay nodes $\{f_j\}$ serve as intermediate hops between the user and destination nodes; each message originating at a user node $n_i$ must reach $\delta$ through a sequence of hops via the UAVs, with the cost of each hop equal to the squared Euclidean distance between the nodes, approximating the communication delay. The UAV locations $\{y_j\}_{j=1}^{5}$, which are not fixed a priori, are stacked into the environment parameter $\zeta \in \mathbb{R}^{10}$. The objective is to determine $\zeta$ jointly with the message routing policy so as to minimize the total communication cost across the network.

The underlying MDP's state space $\mathcal{S}$ is composed of the user node states $\{n_i\}$, the UAV states $\{f_j\}$, and the cost-free destination state $\delta$, i.e., $\mathcal{S}=\{\{n_i\},\{f_j\},\delta\}$, giving $|\mathcal{S}| = 277$ states in total. The action space $\mathcal{A}$ consists of actions that recommend message transitions to either of the five UAVs or directly to $\delta$, i.e., $\mathcal{A}=\{\{f_j\},\delta\}$, giving $|\mathcal{A}| = 6$ actions. The valid action set $\mathcal{A}(s)$ at each state is determined by the network topology: from a user node $n_i$, the message can be transmitted only to the UAV relays $\{f_j\}$; from a UAV state $f_j$, the message may be routed to any other UAV $f_{j'} \neq f_j$ or to the destination $\delta$; and $\delta$ is a cost-free termination state. For the purpose of simulation we consider the following synthetic transition kernel parameterized by $\kappa \in (0,1]$:
\begin{align}\label{eq:transition_kernel}
p_\zeta(s'|s,a) = \begin{cases}
\kappa & \text{if } s' = a \\
(1-\kappa)\, q_\zeta(s'|s,a) & \text{otherwise}
\end{cases},
\end{align}
where $q_\zeta(s'|s,a)$ is a Gibbs distribution over the intended states of the remaining valid actions $\mathcal{A}(s)\setminus\{a\}$: $q_\zeta(s'|s,a) \propto \exp\left\{-\eta\,\|\mathbf{x}_{s'} - \mathbf{x}_s\|^2\right\}$,
with $\mathbf{x}_s \in \mathbb{R}^2$ denoting the 2D coordinates corresponding to the state $s$, $\eta = 100$ the concentration parameter governing the sharpness of the off-target distribution, and $\kappa \in \{1.00,0.95,0.90,0.85\}$ varied across experiments. The instantaneous cost is given by  $c_\zeta(s,a,s') = \|\mathbf{x}_{s'} - \mathbf{x}_s\|^2$, the initial state distribution $\mu$ is uniform over the user node states, and discounting $\gamma=1$.

The above described transition kernel and cost function are used to design the {\em emulator} with which the Algorithm \ref{alg:envpg_ac} interacts to jointly optimize the UAV placements $\zeta$ and the routing policy $\pi_\theta$ --- making the entire setup model-free. In particular, we don't use the explicit definitions of the transition kernel and the cost function at any point, and instead sample trajectories from the emulator to estimate the appropriate gradients in Algorithm \ref{alg:envpg_ac}. For the purpose of simulation, the actor and critic networks are parameterized by a two-layer neural network with the state and action co-ordinates, and one-hot state and action representation as inputs. The temperature $\alpha$ is annealed geometrically to $\approx 0$ across outer iterations to enable $J_{\mathrm{H}}(\pi,\zeta)\rightarrow J(\pi,\zeta)$.

Our goal is to demonstrate the practical utility of the EnvPG gradient expression in the model-free setting, and to validate the correctness of Algorithm~\ref{alg:envpg_ac} against ground truth, rather than to compare with the existing co-design methods. To this end, we assess the quality of the UAV placements and routing policy learned by Algorithm~\ref{alg:envpg_ac} against a {\em model-based oracle} that has full access to the transition kernel $p_\zeta(s'|s,a)$ and the cost function $c_\zeta(s,a,s')$. At every $k$-th iteration, the oracle computes the entropy-regularized value function $V^{\mathrm H}_{\pi^*,\zeta}$ at each of the environment parameter values obtained by perturbing every coordinate of $\zeta_k$ by $\pm\epsilon$. The corresponding values of $J_{\mathrm{H}}(\pi^*,\zeta)$ are then obtained from $V^{\mathrm H}_{\pi^*,\zeta}$ at each of the perturbed environment values, and used to estimate $\nabla_{\zeta}J_{\mathrm{H}}(\pi^*,\zeta_k)$ by central differences. Gradient descent is then used to update $\zeta_k$, with $\alpha$ annealed geometrically across iterations till the optimal policy (\ref{eq:gibbs_policy}) converges to a binary value; mirroring the annealing schedule in Algorithm \ref{alg:envpg_ac}.


Figure \ref{fig:network_designs} shows the final UAV placements obtained by Algorithm \ref{alg:envpg_ac} and the model-based oracle for each value of $\kappa\in\{1.0,0.85\}$; qualitatively converging to similar network geometries. In each plot, user nodes are colored according to the UAV they are routed to under the optimal policy. The arrows indicate the recommended action from each UAV under the optimal policy, however, the realized
next hop may differ, since transitions follow the stochastic kernel
\eqref{eq:transition_kernel}. Table~\ref{tab:jstar} reports the final objective value across all $\kappa \in \{1.0, 0.95, 0.90, 0.85\}$, averaged over $6$ independent runs at each $\kappa$, with a different random initializations of $\zeta$. The averaged final objective value $J^*$ achieved by Algorithm~\ref{alg:envpg_ac} and the oracle is in close agreement across all $\kappa$ values, as shown in the table, with an average relative difference of approximately $0.56\%$ over $6$ initializations each. This validates Algorithm~\ref{alg:envpg_ac}.
\begin{table}[t]
\setlength{\tabcolsep}{3pt}
\centering
\caption{Objective value $J^*$ across 
$\kappa$}
\vspace{-0.3cm}
\label{tab:jstar}
\begin{tabular*}{\columnwidth}{@{\extracolsep{\fill}}l cccc@{}}
\toprule
Method & $\kappa=1.00$ & $\kappa=0.95$ & $\kappa=0.90$ & $\kappa=0.85$ \\
\midrule
Alg.~\ref{alg:envpg_ac} & $13.08\pm0.23$ & $13.98\pm0.09$ & $14.87\pm0.07$ & $15.71\pm0.11$ \\
Oracle & $12.97\pm0.14$ & $14.10\pm0.33$ & $14.85\pm0.13$ & $15.64\pm0.19$ \\
\bottomrule
\end{tabular*}
\vspace{-0.2cm}
\end{table}

Figure~\ref{fig:network_designs}(c) plots the norm of the stochastic estimate of $\nabla_\zeta J_{\mathrm{H}}(\pi,\zeta)$, computed by the Algorithm \ref{alg:envpg_ac}, against iteration for each $\kappa$; the estimate is computed over sampled minibatches, and the bold curve is a moving average over a window of $100$ iterations. In all four cases the gradient norm decreases and plateaus at a small residual level --- it does not vanish exactly, since it is a minibatch estimate --- indicating that Algorithm~\ref{alg:envpg_ac} reaches a stationary point of the entropy-regularized objective $J_{\mathrm{H}}(\pi,\zeta)$ in~\eqref{eq:entropy_objective}. Since $\alpha \rightarrow 0$ over the course of training, $J_{\mathrm{H}}(\pi,\zeta) \rightarrow J(\pi,\zeta)$ in (\ref{eq:objective}), recovering the original co-design objective.

\section{Conclusion and Future Work}
We established the EnvPG Theorem, a closed-form expression for the gradient of the infinite-horizon objective with respect to environment parameters. This is enabled by the generalized action-value function and its separation of the environment parameter into current-step and future-rollout copies. The resulting gradient makes joint policy and environment design possible in a model-free setting.

Several directions remain open. On the theoretical side, characterizing the class of approximators of $\mathcal{Q}_{\pi,\xi}(s,a,\zeta)$ that yield unbiased estimates of the $\nabla_\theta J(\pi_\theta,\zeta)$ and $\nabla_\zeta J(\pi_\theta,\zeta)$, analogous to the compatible function approximation theorem \cite{sutton1999policy}, remains to be addressed. A related direction is a $\zeta$-baseline, analogous to the advantage function \cite{williams1992simple}, to reduce the variance of the gradient $\nabla_\zeta J(\pi,\zeta)$ estimator. On the algorithmic side, a formal convergence analysis of Algorithm \ref{alg:envpg_ac} using stochastic approximation theory would complement the empirical convergence observed in Section \ref{sec: simulation}; a related question is characterizing how the perturbation scale $\sigma$ used to sample $\epsilon_{\mathrm{nb}}$ affects the accuracy and reliability of the resulting gradient estimate. Finally, extending the framework to discrete $\zeta$, and to constrained MDPs where $\zeta$ is subject to physical or operational constraints, would broaden its applicability to a wider class of engineering problems.

\bibliographystyle{IEEEtran}
\bibliography{IEEEabrv}

@book{Sutton1998,
  author = {Sutton, Richard S. and Barto, Andrew G.},
  edition = {Second},
  publisher = {The MIT Press},
  title = {Reinforcement Learning: An Introduction},
  url = {http://incompleteideas.net/book/the-book-2nd.html},
  year = {2018 }
}

@book{bertsekas1996neuro,
  title={Neuro-Dynamic Programming},
  author={Bertsekas, Dimitri P and Tsitsiklis, John N},
  year={1996},
  publisher={Athena Scientific},
  address={Belmont, MA}
}

@InProceedings{pmlr-v100-luck20a,
  title = 	 {Data-efficient Co-Adaptation of Morphology and Behaviour with Deep Reinforcement Learning},
  author =       {Luck, Kevin Sebastian and Amor, Heni Ben and Calandra, Roberto},
  booktitle = 	 {Proceedings of the Conference on Robot Learning},
  pages = 	 {854--869},
  year = 	 {2020},
  editor = 	 {Kaelbling, Leslie Pack and Kragic, Danica and Sugiura, Komei},
  volume = 	 {100},
  series = 	 {Proceedings of Machine Learning Research},
  month = 	 {30 Oct--01 Nov},
  publisher =    {PMLR},
  url = 	 {https://proceedings.mlr.press/v100/luck20a.html}
}

@ARTICLE{9517030,
  author={Srivastava, Amber and Salapaka, Srinivasa M.},
  journal={IEEE Transactions on Cybernetics}, 
  title={Parameterized MDPs and Reinforcement Learning Problems—A Maximum Entropy Principle-Based Framework}, 
  year={2022},
  volume={52},
  number={9},
  pages={9339-9351},
  doi={10.1109/TCYB.2021.3102510}}

@article{firooznia2017co,
  title={Co-design of controller and communication topology for vehicular platooning},
  author={Firooznia, Amir and Ploeg, Jeroen and Van De Wouw, Nathan and Zwart, Hans},
  journal={IEEE Transactions on Intelligent Transportation Systems},
  volume={18},
  number={10},
  pages={2728--2739},
  year={2017},
  publisher={IEEE}
}

@article{williams1992simple,
  title={Simple statistical gradient-following algorithms for connectionist reinforcement learning},
  author={Williams, Ronald J},
  journal={Machine learning},
  volume={8},
  number={3},
  pages={229--256},
  year={1992},
  publisher={Springer}
}

@inproceedings{he2024morph,
  title={Morph: Design co-optimization with reinforcement learning via a differentiable hardware model proxy},
  author={He, Zhanpeng and Ciocarlie, Matei},
  booktitle={2024 IEEE International Conference on Robotics and Automation (ICRA)},
  pages={7764--7771},
  year={2024},
  organization={IEEE}
}

@article{bolland2022jointly,
  title={Jointly Learning Environments and Control Policies with Projected Stochastic Gradient Ascent},
  author={Bolland, Adrien and Boukas, Ioannis and Berger, Mathias and Ernst, Damien},
  journal={Journal of Artificial Intelligence Research},
  volume={73},
  pages={117--171},
  year={2022}
}

@inproceedings{jackson2021orchid,
  title={Orchid: optimisation of robotic control and hardware in design using reinforcement learning},
  author={Jackson, Lucy and Walters, Celyn and Eckersley, Steve and Senior, Pete and Hadfield, Simon},
  booktitle={2021 IEEE/RSJ International Conference on Intelligent Robots and Systems (IROS)},
  pages={4911--4917},
  year={2021},
  organization={IEEE}
}

@inproceedings{haarnoja2018soft,
  title={Soft actor-critic: Off-policy maximum entropy deep reinforcement learning with a stochastic actor},
  author={Haarnoja, Tuomas and Zhou, Aurick and Abbeel, Pieter and Levine, Sergey},
  booktitle={International conference on machine learning},
  pages={1861--1870},
  year={2018},
  organization={Pmlr}
}

@article{bhatia2021evolution,
  title={Evolution gym: A large-scale benchmark for evolving soft robots},
  author={Bhatia, Jagdeep and Jackson, Holly and Tian, Yunsheng and Xu, Jie and Matusik, Wojciech},
  journal={Advances in Neural Information Processing Systems},
  volume={34},
  pages={2201--2214},
  year={2021}
}

@article{ha2019reinforcement,
  title={Reinforcement learning for improving agent design},
  author={Ha, David},
  journal={Artificial life},
  volume={25},
  number={4},
  pages={352--365},
  year={2019},
  publisher={MIT Press One Rogers Street, Cambridge, MA 02142-1209, USA journals-info~…}
}

@article{pao2024control,
  title={Control co-design of wind turbines},
  author={Pao, Lucy Y and Pusch, Manuel and Zalkind, Daniel S},
  journal={Annual Review of Control, Robotics, and Autonomous Systems},
  volume={7},
  year={2024},
  publisher={Annual Reviews}
}

@inproceedings{schaff2019jointly,
  title={Jointly learning to construct and control agents using deep reinforcement learning},
  author={Schaff, Charles and Yunis, David and Chakrabarti, Ayan and Walter, Matthew R},
  booktitle={2019 international conference on robotics and automation (ICRA)},
  pages={9798--9805},
  year={2019},
  organization={IEEE}
}

@article{sutton1999policy,
  title={Policy gradient methods for reinforcement learning with function approximation},
  author={Sutton, Richard S and McAllester, David and Singh, Satinder and Mansour, Yishay},
  journal={Advances in neural information processing systems},
  volume={12},
  year={1999}
}

@article{mnih2015human,
  title={Human-level control through deep reinforcement learning},
  author={Mnih, Volodymyr and Kavukcuoglu, Koray and Silver, David and Rusu, Andrei A and Veness, Joel and Bellemare, Marc G and Graves, Alex and Riedmiller, Martin and Fidjeland, Andreas K and Ostrovski, Georg and others},
  journal={nature},
  volume={518},
  number={7540},
  pages={529--533},
  year={2015},
  publisher={Nature Publishing Group}
}

@InProceedings{silver2014deterministic,
  title = 	 {Deterministic Policy Gradient Algorithms},
  author = 	 {Silver, David and Lever, Guy and Heess, Nicolas and Degris, Thomas and Wierstra, Daan and Riedmiller, Martin},
  booktitle = 	 {Proceedings of the 31st International Conference on Machine Learning},
  pages = 	 {387--395},
  year = 	 {2014},
  editor = 	 {Xing, Eric P. and Jebara, Tony},
  volume = 	 {32},
  number =       {1},
  series = 	 {Proceedings of Machine Learning Research},
  address = 	 {Bejing, China},
  month = 	 {22--24 Jun},
  publisher =    {PMLR},
  url = 	 {https://proceedings.mlr.press/v32/silver14.html}
}

@inproceedings{
cauz2024reinforcement,
title={Reinforcement Learning for Efficient Design and Control Co-optimisation of Energy Systems},
author={Marine Cauz and Adrien Bolland and Christophe Ballif and Nicolas Wyrsch},
booktitle={ICML 2024 AI for Science Workshop},
year={2024}
}

@article{garcia2019control,
  title={Control Co-Design: an engineering game changer},
  author={Garcia-Sanz, Mario},
  journal={Advanced Control for Applications: Engineering and Industrial Systems},
  volume={1},
  number={1},
  pages={e18},
  year={2019},
  publisher={Wiley Online Library}
}

@conference{Fox2016Glearning,
  Title = "Taming the Noise in Reinforcement Learning via Soft Updates",
  Author = "Roy Fox and Ari Pakman and Naftali Tishby",
  Booktitle = "32nd Conference on Uncertainty in Artificial Intelligence (UAI)",
  Year = "2016"
}
\end{document}